\def\IC{{\mathbb C}}
\def\ra{\rangle}
\def\la{\langle}
\newcommand*\dashline{\rotatebox[origin=c]{90}{$\dabar@\dabar@\dabar@$}}
\newlength{\Oldarrayrulewidth}
\theoremstyle{plain}
\newtheorem{proposition}{Proposition}[section]
\newtheorem{theorem}[proposition]{Theorem}
\newtheorem{lemma}[proposition]{Lemma}
\theoremstyle{definition}
\newtheorem{remarks}[proposition]{Remarks}
\theoremstyle{remark}
\def\Ddots{\mathinner{\mkern1mu\raise\p@
\vbox{\kern7\p@\hbox{.}}\mkern2mu
\raise4\p@\hbox{.}\mkern2mu\raise7\p@\hbox{.}\mkern1mu}}
\def\cG{{\cal G}}
\title{Decomposition of quantum gates}
\author{Chi Kwong Li and Diane Christine Pelejo\\
Department of Mathematics, 
College of William and Mary, 
Williamsburg, VA 23187, USA \\
E-mail: ckli@math.wm.edu, dcpelejo@gmail.com}
\date{}
\begin{document}
\baselineskip 14.8pt
\maketitle

\begin{abstract}
A recurrence scheme is presented to decompose an $n$-qubit  
unitary gate to the product of no more than
$N(N-1)/2$ single qubit gates with small 
number of controls, where $N = 2^n$. 
Detailed description of the recurrence steps and formulas for the number
of $k$-controlled single qubit gates in the decomposition are given.
Comparison of the result to a previous scheme is presented, and
future research directions are discussed. 
\end{abstract}

\section{Introduction} 

The foundation of quantum computation \cite{nc} involves
the encoding of computational tasks into the temporal
evolution of a quantum system. A register of
qubits, identical two-state quantum systems, is employed, and
quantum algorithms can be described by unitary
transformations and projective measurements acting on
the state vector of the register. In this
context, unitary matrices (transformations)
are called quantum gates.
Mathematically, a two-state quantum system has vector states $|\psi\ra$
in $\IC^2$, known as qubits.  The two vectors in 
the standard basis $\{|0\ra, |1\ra\}$ 
for $\IC^2$ correspond to two physically measurable quantum states.
An $n$-qubit system containing registers of $n$-qubits 
has vector states in the Euclidean space 
$\mathbb{C}^2\otimes \cdots \otimes \mathbb{C}^2=(\mathbb{C}^2)^{\otimes n}$
with basis vectors
$$|i_n \cdots i_1\ra = |i_n\ra \otimes \cdots \otimes |i_1\ra, 
\qquad i_1, \dots, i_n \in \{0,1\}$$
corresponding to the $2^n$ physically measurable states.
 
For a single qubit, one can use quantum gates corresponding to unitary
transformations to manipulate (transform) the qubit. For an $n$-qubit system
with large $n$,  it is challenging and expensive to implement quantum gates.
One often has to decompose a general quantum gate  into the product
of simple/elementary  unitary gates which can be readily created physically. 
For a discussion on decomposing a unitary matrix into sets of elementary quantum gates, 
see, for example,  \cite{dask}, \cite{DiV},  \cite{mdel}, \cite{slepoy}, and their references.
By elementary linear algebra, it is known that every $N\times N$ 
unitary matrix can be written as the product of no more than 
$N(N-1)/2$ 2-level unitary matrices (Given's transforms), i.e., unitary
matrices obtained from the identity matrix by changing a $2\times 2$ principal submatrix.
For example, if $U \in M_4$ is unitary, then there are unitary matrices of the form
$$U_1 = {\scriptsize \begin{pmatrix} 
1 & 0 & 0 & 0 \cr
0 & 1 & 0 & 0 \cr
0 & 0 & * & * \cr
0 & 0 & * & * \cr\end{pmatrix}} \quad
U_2 = {\scriptsize\begin{pmatrix} 
1 & 0 & 0 & 0 \cr
0 & * & * & 0 \cr
0 & * & * & 0 \cr
0 & 0 & 0 & 1 \cr\end{pmatrix}} \quad 
U_3 = {\scriptsize \begin{pmatrix} 
* & * & 0 & 0 \cr
* & * & 0 & 0 \cr
0 & 0 & 1 & 0 \cr
0 & 0 & 0 & 1 \cr\end{pmatrix}}$$
so that $U_1U$ has a zero $(4,1)$ entry, $U_2U_1U$ has zero  entries at the $(4,1)$ and 
$(3,1)$ positions, and $U_3U_2U_1U$ has zero entries at the $(4,1), (3,1), (2,1)$ 
positions, and $(1,1)$ entry equal to one. Because $U_3U_2U_1U$ is unitary, 
it will be of the form $[1] \oplus \tilde U$ with $\tilde U \in M_3$.
We can then find unitary matrices of the form
$$
U_4 = {\scriptsize \begin{pmatrix} 
1 & 0 & 0 & 0 \cr
0 & 1 & 0 & 0 \cr
0 & 0 & * & * \cr
0 & 0 & * & * \cr\end{pmatrix}} \quad
U_5 = {\scriptsize \begin{pmatrix} 
1 & 0 & 0 & 0 \cr
0 & * & * & 0 \cr
0 & * & * & 0 \cr
0 & 0 & 0 & 1 \cr\end{pmatrix}} \quad 
U_6 = {\scriptsize \begin{pmatrix} 
1 & 0 & 0 & 0 \cr
0 & 1 & 0 & 0 \cr
0 & 0 & * & * \cr
0 & 0 & * & * \cr\end{pmatrix}}$$
so that $U_5U_4U_3U_2U_1U$ has the form $I_2 \oplus V$ with $V \in M_2$
and $U_6 \dots, U_1 U = I_4$.
It follows that $U = U_1^* \cdots U_6^*.$ 

In the context of quantum information science, not all 2-level unitary
matrices are easy to implement. In this context, one considers
matrices of sizes $N = 2^n$ labeled by binary sequences
$i_n \cdots i_1 \in \{0,1\}^n$
corresponding to the measurable quantum state
$|i_n \cdots i_1 \ra$. Then certain two level unitary matrices
correspond to quantum operations acting on the $j$th qubit
provided the other 
qubits $|i_n\ra, \dots, |i_{j+1}\ra, |i_{j-1}\ra, \dots, |i_1\ra$
assume specified values in $\{|0\ra, |1\ra\}$.
These are known as the fully controlled qubit gates.
For example, when $n = 2$, we label the rows and columns of matrices
by $00,01,10,11$. There are four types of fully-controlled 2-qubit gates:
$$
(0V): {\scriptsize \begin{pmatrix} 
v_{11} & v_{12} & 0 & 0 \cr
v_{21} & v_{22} & 0 & 0 \cr
0 & 0 & 1 & 0 \cr
0 & 0 & 0 & 1 \cr\end{pmatrix}} \quad
(1V): {\scriptsize \begin{pmatrix} 
1 & 0 & 0 & 0 \cr
0 & 1 & 0 & 0 \cr
0 & 0 & v_{11} & v_{12} \cr
0 & 0 & v_{21} & v_{22} \cr\end{pmatrix}} \quad
(V0): {\scriptsize \begin{pmatrix} 
v_{11} &0 &  v_{12} & 0 \cr
0 & 1 & 0 & 0 \cr
v_{21} & 0 & v_{22} & 0 \cr
0 & 0 & 0 & 1 \cr\end{pmatrix}} \quad 
(V1): {\scriptsize \begin{pmatrix} 
1 & 0 & 0 & 0 \cr
0 & v_{11} & 0 & v_{12} \cr
0 & 0 & 1 & 0 \cr
0 & v_{21} & 0 & v_{22} \cr\end{pmatrix}} 
$$
with the unitary $V = {\small \begin{pmatrix} 
v_{11} & v_{12} \cr
v_{21} & v_{22} \cr \end{pmatrix}} \in M_2$.
In particular,  
a $(0V)$-gate corresponds to the unitary operator 
$$a|00\ra + b|01\ra + c|10\ra + d|11\ra \mapsto 
|0\ra V(a|0\ra + b|1\ra) + |1\ra (c|0\ra + d|1\ra),$$
which will only change the part of the vector state with the first 
qubit equal to $|0\ra$. Similarly, 
a $(1V)$-gate corresponds to the unitary operator 
$$a|00\ra + b|01\ra + c|10\ra + d|11\ra \mapsto 
|0\ra (a|0\ra + b|1\ra) + |1\ra V(c|0\ra + d|1\ra),$$
which will only change the part of the vector state with the first qubit equal to 
$|1\ra$. The $(V0)$-gate and $(V1)$-gate have the same physical interpretation.
One can associate the 4 types of controlled qubit gates with the following 
circuit diagrams:
\begin{center}
$ (0V):$  \hskip 0.5cm
\begin{tikzpicture}[line cap=round,line join=round,x=1.0cm,y=1.0cm]
\draw (0,0) -- (0.3,0);
\draw (0.3,-0.2) -- (0.3,0.2) -- (0.7,0.2) -- (0.7,-0.2) -- cycle;
\draw (0.5,0) node {$V$};
\draw (0.7,0) -- (1,0);
\draw (0,-0.5) -- (1,-0.5);
\draw (0.5,-0.5) circle (3pt);
\draw (0.5,-0.2) -- (0.5,-0.6);
\end{tikzpicture}\hskip 1.5cm
$(1V):$ 
\hskip 0.5cm
\begin{tikzpicture}[line cap=round,line join=round,x=1.0cm,y=1.0cm]
\draw (0,0) -- (0.3,0);
\draw (0.3,-0.2) -- (0.3,0.2) -- (0.7,0.2) -- (0.7,-0.2) -- cycle;
\draw (0.5,0) node {$V$};
\draw (0.7,0) -- (1,0);
\draw (0,-0.5) -- (1,-0.5);
\fill (0.5,-0.5) circle (3pt);
\draw (0.5,-0.2) -- (0.5,-0.6);
\end{tikzpicture}\hskip 1.5cm
$(V0):$ \hskip 0.5cm
\begin{tikzpicture}[line cap=round,line join=round,x=1.0cm,y=1.0cm]
\draw (0,0) -- (0.3,0);
\draw (0.3,-0.2) -- (0.3,0.2) -- (0.7,0.2) -- (0.7,-0.2) -- cycle;
\draw (0.5,0) node {$V$};
\draw (0.7,0) -- (1,0);
\draw (0,0.5) -- (1,0.5);
\draw (0.5,0.5) circle (3pt);
\draw (0.5,0.2) -- (0.5,0.6);
\end{tikzpicture}\hskip 1.5cm
$(V1):$
\hskip 0.5cm
\begin{tikzpicture}[line cap=round,line join=round,x=1.0cm,y=1.0cm]
\draw (0,0) -- (0.3,0);
\draw (0.3,-0.2) -- (0.3,0.2) -- (0.7,0.2) -- (0.7,-0.2) -- cycle;
\draw (0.5,0) node {$V$};
\draw (0.7,0) -- (1,0);
\draw (0,0.5) -- (1,0.5);
\fill (0.5,0.5) circle (3pt);
\draw (0.5,0.2) -- (0.5,0.6);
\end{tikzpicture}\hskip 1.5cm
\end{center}
For $n = 3$, we have fully-controlled qubit gates of the types:
$$(00V), (01V), (10V), (11V), (0V0), (0V1), (1V0), 
(1V1), (V00), (V01), (V10), (V11).$$ 
One easily extends this idea and notation to define fully-controlled gates acting
on $n$-qubits.

In \cite{fin} (see also \cite{reb}),
it was shown that one can decompose a quantum gate into the product of 
2-level matrices corresponding to fully-controlled qubit gates.
While fully-controlled qubit gates are relatively simple, it is still not easy
to implement because the qubit gate $V$ can only act on the target bit
after verifying that the other $(n-1)$-qubits satisfy the controlled bits.
As mentioned in \cite{fin},
in practice it is desirable to replace fully controlled qubit gates by 
qubits gates as few controls as possible.
For example, when $n = 2$, the following types of
unitary gates with no controls
$$(*V):  I_2 \otimes V
= {\scriptsize \begin{pmatrix} 
v_{11} & v_{12} & 0 & 0 \cr
v_{21} & v_{22} & 0 & 0 \cr
0 & 0 & v_{11} & v_{12} \cr
0 & 0 & v_{21} & v_{22} \cr\end{pmatrix}}, \qquad 
(V*): V \otimes I_2 = {\scriptsize \begin{pmatrix} 
v_{11} & 0 & v_{12} & 0 \cr
0 & v_{11} & 0 & v_{12} \cr
v_{21} & 0 & v_{22} & 0 \cr
0 & v_{21} & 0 & v_{22} \cr\end{pmatrix}}$$
are easier to implement. 
Note that a $(0V)$-gate is applied on the left 
of a matrix $A\in M_4$, only rows $00$ and $01$ are affected. 
Similarly, a $(1V)$-gate will only affect the $10$ and $11$ gate 
of $A$. However, a $(*V)$-gate and $(V*)$-gate will affect all rows 
of $A$. 

In general, we can consider a $(c_n c_{n-1}\cdots c_1)$-unitary gate
with $c_n, \dots, c_1 \in \{0,1,*,V\}$, where only one of the terms
is $V$, and the number of terms in $\{0,1\}$ is the total number
of controls. For example, a $(11*0V1)$-unitary gate acting on 
6-qubit states has 4 controls,  and the target qubit is the fifth one. 

In \cite{fin}, a recurrence scheme was proposed to decompose a unitary gate 
as the product of controlled qubit gates with small number of controls.
The purpose of this paper is to present another simple recurrence scheme,
which provide an alternative choice for implementation.
Moreover, the ideas and techniques in the construction may be
helpful for further research in this and related problems.

Our paper is organized as follows. 
In Section 2, 
we will illustrate our scheme for the 2-qubit and
3-qubit case, and discuss how it can be extended. 
In Section 3, we  present the general scheme with detailed description
of the implementation steps and explanation of their validity.
In Section 4, we obtain formulas for the number
of $k$-controlled single qubit gates in the decomposition and  
compare our results to those in scheme in \cite{fin}. 
Concluding remarks and future research directions
are mentioned in Section 5.

\section{Two-qubit and Three-qubit cases}

For an $n$-qubit unitary gate $U \in M_N$ with $N = 2^n$,
we will describe a recurrence scheme for generating
controlled single qubit unitary gates 
$U_1, \dots, U_r$ with $r \le N(N-1)/2$ 
such that $U_r \cdots U_1 U = I_N$.
Consequently, $U = U_1^{\dag} \cdots U_r^{\dag}$.

Our scheme is done as follows. Assume we have the reduction scheme for
the $(n-1)$-qubit case.

\medskip\noindent
{\bf Step 1}
Partition $U \in M_n$ into a $2\times 2$ block matrix with each block lying 
in $M_{N/2}$.

\medskip\noindent
{\bf Step 2} Use the scheme of the $(n-1)$-qubit case to help reduce $U$ to the form 
$I_{N/2} \oplus \tilde U$ with $\tilde U \in M_{N/2}.$

\medskip\noindent
{\bf Step 3} Apply the scheme of the $(n-1)$-qubit case to transform $\tilde U$ to $I_{N/2}$. 

\bigskip
In {\bf Step 2}, we need to eliminate the nonzero off-diagonal entries of 
$U$ for the  first $N/2$ columns.  We will do these elimination column by column.
In each column, we first eliminate the off-diagonal entries 
with row indices smaller than $N/2+1$ using the scheme in the $(n-1)$-qubit case.
Then we again use the elimination schemes of lower dimension cases to eliminate the
entries with row indices larger than $N/2$.

\medskip\noindent
{\large{\bf  The  two-qubit gate.}}

\medskip
In the following tables, we indicate the order of the entries to be eliminated in our
scheme, and also the $(c_2c_1)$-gates used to do the elimination.

\medskip
\noindent
Column 1. 
\[\begin{array}{|c|c|c|c|}
\hline 
\hbox{ entries } & \mbox{ \footnotesize\tt (2,1)} & 
\mbox{ \footnotesize\tt (4,1)} & \mbox{ \footnotesize\tt (3,1)} \\\hline
\hbox{ gates } & \mbox{ \footnotesize\tt (*V)} & \mbox{ \footnotesize\tt (1V)} & 
\mbox{ \footnotesize\tt (V*)}\\\hline
\end{array} \]

\noindent
Column 2. 
\[\hskip 1.1cm \begin{array}{|c|c|c|}
\hline 
\hbox{ entries } & \mbox{ \footnotesize\tt (3,2)} & \mbox{ \footnotesize\tt (4,2)} 
\\\hline
\hbox{ gates } & \mbox{ \footnotesize\tt (1V)} & \mbox{ \footnotesize\tt (V1)}\\\hline
\end{array}\]

\noindent
Column 3. 
\[\begin{array}{|c|c|}
\hline 
\hbox{ entries } & \mbox{ \footnotesize\tt (4,3)} \\\hline
\hbox{ gates } & \mbox{ \footnotesize\tt (1V)} \\\hline
\end{array} \hskip 2.3cm\]

Here we first eliminate the $(2,1)$ entry as in the 1-qubit case.  
In a similar manner, annihilate the $(4,1)$ entry, treating it as the second entry of 
the lower left half of the first column. To keep the $(2,1)$ entry zero, we use a gate 
with a $1-control$ in the leftmost bit. Finally we annihilate the $(3,1)$ entry with 
the help of the $(1,1)$ entry. In this case, we can use a control-free gate to do so. 
At this point, the current form of the matrix is $[1]\oplus U'$, where $U'\in M_{3}$. 

Then we move to the second column. We adapt the procedure of eliminating the 
$(4,1)$ and $(3,1)$ entries to eliminate the $(3,2)$ and $(4,2)$ entries. 
The gates used must not change the zero entries in the first column.
After this, the matrix takes the form $I_2 \oplus U_1$ with $U_1 \in M_2$.
We can deal with the matrix $U_1$ as in the 1-qubit case
using a $(1V)$-gate so that the first two rows will not
be affected.

\medskip\noindent
{\large{\bf The three qubit case.}}

In this case, we have  3 types of unitary gates with no control:
$$(**V), \ (*V*), \ (V**);$$  $12$ types of unitary gates
with 1 control ($0$ or $1$) and 1 target qubit: 
$$(0*V),(1*V), (0V*), (1V*),
(*0V), (*1V), (V0*), (V1*), (*V0), (*V1), (V*0), (V*1);$$ 
and 12 types of 
unitary gates with 2 controls and 1 target qubit: 
$$(00V), (01V), (10V), (11V), (0V0), (0V1), (1V0), (1V1), (V00), (V01), (V10), (V11).$$

We execute the reduction scheme for three qubit gates as follows.

\medskip
\noindent
Column 1. 
$$\begin{array}{|c|c|c|c|c|c|c|c|}
\hline
\hbox{ entries } & \mbox{ \footnotesize\tt{(2,1)}} & \mbox{ \footnotesize\tt{(4,1)}} &  \mbox{ \footnotesize\tt{(3,1)}} & \mbox{ \footnotesize\tt(6,1)} & \mbox{ \footnotesize\tt(8,1)} & \mbox{ \footnotesize\tt(7,1)} & \mbox{ \footnotesize\tt (5,1)}\cr
\hline 
\hbox{ gates } & \mbox{ \footnotesize\tt{(**V)}} & \mbox{ \footnotesize\tt{(*1V)}} & \mbox{ \footnotesize\tt{(*V*)}} &\mbox{ \footnotesize\tt(1*V)} & \mbox{ \footnotesize\tt(*1V)} & \mbox{ \footnotesize\tt(1V*)} & \mbox{ \footnotesize\tt(V**)}\cr
\hline
\end{array}$$

\noindent
Column 2. 
$$\hskip .6in \begin{array}{|c|c|c|c|c|c|c|}
\hline
\hbox{ entries } & \mbox{ \footnotesize\tt{(3,2)}} &  \mbox{ \footnotesize\tt{(4,2)}} & \mbox{ \footnotesize\tt(5,2)} & \mbox{ \footnotesize\tt(7,2)} & \mbox{ \footnotesize\tt(8,2)} & \mbox{ \footnotesize\tt(6,2)}\cr
\hline 
\hbox{ gates } & \mbox{ \footnotesize\tt{(*1V)}} & \mbox{ \footnotesize\tt{(*V1)}} & 
\mbox{ \footnotesize\tt(1*V)} & \mbox{ \footnotesize\tt(*1V)} & \mbox{ \footnotesize\tt(1V*)} & \mbox{ \footnotesize\tt(V*1)}\cr
\hline
\end{array}$$

\noindent
Column 3. 
$$\hskip 1.2in \begin{array}{|c|c|c|c|c|c|}
\hline
\hbox{ entries } & \mbox{ \footnotesize\tt{(4,3)}} &  \mbox{ \footnotesize\tt(8,3)} & \mbox{ \footnotesize\tt(6,3)} & \mbox{ \footnotesize\tt(5,3)} & \mbox{ \footnotesize\tt(7,3)}\cr
\hline 
\hbox{ gates } & \mbox{ \footnotesize\tt{(*1V)}} &\mbox{ \footnotesize\tt(1*V) } & \mbox{ \footnotesize\tt(10V)} & \mbox{ \footnotesize\tt(1V*)} & \mbox{ \footnotesize\tt(V1*)}\cr
\hline
\end{array}$$

\noindent
Column 4. 
$$\hskip 1.8in \begin{array}{|c|c|c|c|c|}
\hline
\hbox{ entries } & \mbox{ \footnotesize\tt(7,4)} &  \mbox{ \footnotesize\tt(5,4)} & \mbox{ \footnotesize\tt(6,4)} & \mbox{ \footnotesize\tt(8,4)}\cr
\hline 
\hbox{ gates } & \mbox{ \footnotesize\tt(1*V)} & 
\mbox{ \footnotesize\tt(10V)} & \mbox{ \footnotesize\tt(1V*)} & \mbox{ 
\footnotesize\tt(V11)}\cr
\hline
\end{array}$$

\noindent
Column 5. 
\[\begin{array}{|c|c|c|c|}
\hline 
\hbox{ entries } & \mbox{ \footnotesize\tt(6,5)} & \mbox{ \footnotesize\tt(8,5)} & 
\mbox{ \footnotesize\tt(7,5)} \\\hline
\hbox{ gates } & \mbox{ \footnotesize\tt(1*V)} & \mbox{ \footnotesize\tt(11V)} & \mbox{ 
\footnotesize\tt(1V*)}\\\hline
\end{array}\hskip 5.5cm \]

\noindent
Column 6. 
\[\begin{array}{|c|c|c|}
\hline 
\hbox{ entries } & \mbox{ \footnotesize\tt(7,6)} & \mbox{ \footnotesize\tt(8,6)} 
\\\hline
\hbox{ gates } & \mbox{ \footnotesize\tt(11V)} & \mbox{ \footnotesize\tt(1V1)}\\\hline
\end{array}\hskip 4.2cm \]

\noindent
Column 7. 
\[\begin{array}{|c|c|}
\hline 
\hbox{ entries } & \mbox{ \footnotesize\tt(8,7)} \\\hline
\hbox{ gates } & \mbox{ \footnotesize\tt(11V)}\\\hline
\end{array} \hskip 3cm\]

\begin{remarks} \label{n3}\ 
Here we give some remarks about the reduction of a 3-qubit unitary
gate to help illustrate our recurrence scheme and how it can be extended.
The comments are numbered according to the major steps 1--3 of our 
scheme described in the beginning of this Section.

\medskip\noindent
{\bf Step 1} We partition the $8 \times 8$ unitary matrix into 2-by-2 block matrix
so that each block is $4\times 4$.

\bigskip\noindent
{\bf Step 2} We consider Column 1, 2, 3, 4, 

\medskip\noindent
{\bf For Column 1}, the elimination of $(2,1), (4,1), (3,1)$ 
entries will be done as in the  $4\times 4$ (2-qubit) case by changing the
2-qubit $(c_2c_1)$-gates to $(*c_2c_1)$-gates in these steps.

We then annihilate the $(6,1),(8,1)$ and $(7,1)$ entries 
the same way we annihilated the $(2,1),(4,1)$ and $(3,1)$ entries by 
treating the lower half as a $4\times 4$ matrix.
However, we have to ensure that the $(1,1)$
entry will not interact with the zero entries at the $(2,1), (3,1), (4,1)$ 
positions in these steps.
So, we adapt the 2-qubit $(c_2c_1)$-gates to $(c_3c_2c_1)$-gates,  
we will use the following rule:

\medskip
\centerline{\it 
let $c_3 = 1$ if $(c_2c_1)$ is $(*V)$ or $(V*)$; otherwise,
let $c_3 = *$.}

\medskip\noindent
So, a $(1*V)$-gate can be used to annihilate the $(6,1)$ entry, a $(*1V)$-gate can 
be used to annihilate the $(8,1)$ entry and a $(1V*)$-gate to annihilate the $(7,1)$ 
entry. 
Finally, we can apply a $(V**)$-gate to eliminate the 
the $(5,1)$ entry using the $(1,1)$ entry.  

Note that the $(c_3c_2c_1)$-gates used in the Column 1 satisfy
$c_3, c_2, c_1 \in \{*,1,V\}$ with $c_1 \ne 1$. 
This property will hold for the general case.

Once all off-diagonal
entries in Column 1 are annihilated, we obtain a matrix of the form 
$[1]\oplus U'$, where $U'\in M_{7}$. We can proceed to Column 2.

\bigskip\noindent
{\bf For Column 2}, 
we can annihilate the $(3,2)$ and $(4,2)$ entries using the scheme for
annihilating the second column in the $4\times 4$ case
by changing the
2-qubit $(c_2c_1)$-gates to $(*c_2c_1)$-gates in these steps.
 
Next, we  adapt the scheme of annihilating  the $(6,1),(8,1),(7,1),(5,1)$ 
entries to annihilate the lower half entries of the second column.
Note that it is imperative that the $(6,2)$ entry be the last entry to be 
annihilated since it is the only entry in the lower half of the column that 
can be annihilated using the $(2,2)$ entry. In view of this,
we will change the order of annihilation of the entries to:
$$(5,2), (7,2), (8,2), (6,2).$$ 
If we identify $(1,2,\dots,8)$ with the binary sequences 
$(000, 001, \dots, 111)$, then 
\begin{center}
$(6,8,7,5)$ corresponds to $(101,111,110,100)$, and
$(5,7,8,6)$ corresponds to $(100,110,111,101)$.
\end{center}
\noindent
The conversion can be easily realized by
\begin{eqnarray*} (100,110,111,101) &=& (101,111,110,100) \oplus (001,001,001,001)\cr
&=& (101\oplus 001,111 \oplus 001,110 \oplus 001,100 \oplus 001),
\end{eqnarray*}
where $i_3i_2i_1\oplus j_3j_2j_1$ is an entry-wise addition such that
$0\oplus 0 = 1\oplus 1 = 0$ and $0\oplus 1 = 1 \oplus 0 = 1$.
Note that we will use a similar conversion for columns 3 and 4.

We also need to modify the $(c_3c_2c_1)$-gates used for the annihilation 
of the $(6,1), (8,1), (7,1)$ entries
to annihilate the $(5,2), (7,2), (8,2)$ entries. 
To accomodate the change in the order of annihilation, one must modify any control 
found in $c_1$. We also have to prevent the $(1,1)$ entries interacting 
with the $(2,1), (3,1), (4,1)$ entries, and also prevent the $(2,2)$ entries
interacting with the $(3,2)$ and $(4,2)$ entries. This can be done by making sure that at least one of $c_2$ and $c_3$ is equal to $1$. Thus, we modify $(c_3c_2c_1)$ by the following rules: 

\medskip\centerline{\it change $c_3$ to 1 if none of $c_2, c_3$ is 1; 
change $c_1$ to 0 if $c_1 = 1$.}

\medskip\noindent
However, one sees that applying these rules will not
change the $(c_3c_2c_1)$-gates in view of the fact
that $c_1 \ne 1$. Hence we can use exactly the same set of 
$(c_3c_2c_1)$-gates to eliminate the $(5,2), (7,2), (8,2)$ entries of
Column 2.\footnote{As we will see, 
the same phenomenon will hold for columns 3 and 4, and also for the general case.}
Thus, we will use $(1*V), (*1V), (1V*)$ gates to 
annihilate the $(5,2),(7,2)$ and $(8,2)$ entries, respectively.

To annihilate the $(6,2)$ entry, we need to utilize the nonzero $(2,2)$ entry. 
These two entries correspond to rows $101$ and $001$. This means that the target 
bit of the gate  we need is the third bit (leftmost). Because we do not want to 
change the form of  the upper 
half of the first column, we need to make sure that the the gate is not satisfied by 
$000$ but 
is  satisfied by $001$ and $101$. Thus, we use a $(V*1)$-gate. Once this is done, the 
matrix is 
now reduced to the form $I_2\oplus V''$ where $V''\in M_6$.

\bigskip\noindent
{\bf For Column 3}, 
the $(4,3)$ entry is annihilated using the scheme for the third column 
of the $4\times 4$ case. 

Similar to the case in Column 2, we can adapt the scheme of eliminating
the $(6,1), (8,1), (7,1), (5,1)$ entries to 
annihilate the $(8,3), (6,3), (5,3), (7,3)$ entries.
The conversion $(6,8,7,5)$ to $(8,6,5,7)$ is done by performing
$$(111,101,100,110) = (101,111,110,100) \oplus (010,010,010,010)$$
using the binary number correspondence of the indices.

We also need to modify the $(c_3c_2c_1)$-gates used for the annihilation 
of the $(6,1), (8,1), (7,1)$ entries
to annihilate the $(8,3), (6,3), (7,3)$ entries.
In these steps, we have to prevent the $(1,1)$ entries interacting 
with the $(2,1), (3,1), (4,1)$ entries,  the $(2,2)$ entries
interacting with the $(3,2),(4,2)$ entries, and the 
$(3,3)$ entry interacting with the $(4,3)$ entry.
One can do this by adjusting
the $c_3$ and $c_2$ values in the $(c_3c_2c_1)$-gates used 
for the annihilation of the $(6,1), (8,1), (7,1), (5,1)$ entries
by the following rules: 

\medskip\centerline{\it change $c_3$ to 1 if $c_3$ is not 1; 
change $c_2$ to $0$  if $c_2 = 1$.}

\medskip\noindent
Since $c_3$ is 1, for $i = 1,2,3,4$, the $(i,i)$ entry
will not interact with other $(k,i)$ entries for $1 \le k \le 4$ and $k \ne i$.
Note that a $(c_3c_2c_1)$-gate corresponds to a unitary matrix $\tilde V \in M_8$.
Changing a control bit in the position of $c_2$ corresponds to changing 
$\tilde V$ by a permutation similarity $P^t \tilde V P$, where $P$
corresponds to the change of the basis 
$\{|000\ra, \dots, |111\ra\}$ to 
$\{|010\ra, \dots, |101\ra\}$, here we change $|j_2j_2j_1\ra$ to 
$|j_3 (j_2\oplus 1) j_1\ra$.  Thus, the modified $(c_3c_2c_1)$-gates can be used for
Column 3. We will give a general description of this procedure in the next section.
Here, we obtain the $(1*V), (10V), (1V*)$ gates, which can be used 
to annihilate the $(8,3),(6,3),(5,3)$ entries. 

Finally, to annihilate the $(7,3)$ entry, we use the $(3,3)$ entry. 
Hence, the target bit of the gate we need is the leftmost bit. To avoid changing the 
form of the  first and second columns, we need to use controls that are not 
satisfied by $000$ and $001$ but 
is satisfied by $010$ and $110$. Thus, we use the gate $(V1*)$.

\medskip\noindent
{\bf For Column 4}, 
we need not do anything about the first four entries at this point.

We will adapt the scheme for
the $(6,1), (8,1), (7,1), (5,1)$ entries to 
annihilate the $(7,4),(5,4),(6,4),(8,4)$ entries.
The conversion $(6,8,7,5)$ to $(7,5,6,8)$ is done by performing
$$(110,100,101,111) = (101,111,110,100) \oplus (011,011,011,011)$$
using the binary number correspondence of the numbers.

We adjust the $(c_3c_2c_1)$-gates used 
for  the $(6,1), (8,1), (7,1)$ entries
to annihilate the $(7,4),(5,4),(6,4)$ entries as follows,

\medskip\centerline{
\it change $c_3$ to 1  if $c_3$ is not 1; for $i = 1,2$, change 
$c_i$ to $0$ if $c_i = 1$.}

\medskip\noindent
Note that column 4 is associated to the binary sequence $011$.\footnote{
As we will see in the  next section, we always adjust the gates 
according to the the binary sequence associated to the column index.}
We will obtain the $(1*V),(10V),(1V*)$ gates, which can be used  
to annihilate the $(7,4),(5,4),(6,4)$ entries.\footnote{
Note also that the $(c_3c_2c_1)$-gates are the same as those used in 
Column 3 before the final step. We will also explain this in the next section.}
Finally use a $(V11)$-gate to annihilate the $(8,4)$ entry using the $(4,4)$ 
entry while avoiding any change in the form of the first three columns. 

\bigskip\noindent
{\bf Step 3}
Note that after Column 4 is dealt with, the matrix takes the form 
$I_4\oplus V'$ where  $V'\in M_{4}$. We can then use the scheme for 
the $2$-qubit case to transform $V'$ to $I_4$.
However, to avoid changing the form of the first four columns, 
we need to extend the $(c_2c_1)$-gates used in the $4\times 4$ 
case to $(1c_2c_1)$-gates for the remaining steps. This 
explains the tables for columns 5 to 7. 
\end{remarks}

\section{General Scheme}

In this section, we present the general recurrence scheme for
the annihilation of the off-diagonal entries of an $n$-qubit unitary
gate by adapting the reduction scheme of the $(n-1)$-qubit case.
We will carry out {\bf Steps 1 -- 3}
described at the beginning of Section 2,
As illustrated in the 3-qubit case and explained in Remark 2.1,
{\bf Step 2} of the scheme requires some careful attention. 
For each column $\ell = 1, \dots, N/2$ with $N = 2^n$, 
we can always annihilate the off-diagonal 
entries in the upper half of column $\ell$ using the scheme for 
annihilating the first column for an $(n-1)$ qubit unitary gate. 
One only needs to change a $(c_{n-1}\cdots c_2c_1)$-gate to a 
$(*c_{n-1}\cdots c_1)$-gate.

For the lower half of column $\ell$, we have to refine {\bf Step 2} to the
following steps.

\medskip\noindent
{\bf Step 2.1} For column 1, use the reduction scheme for an $(n-1)$-qubit 
to eliminate the off-diagonal entries in the upper half of the column by
changing the $(c_{n-1} \cdots c_1)$-gates used in the $(n-1)$-qubit
gate case to $(*,c_{n-1}, \dots, c_n)$-gates in these steps.

\medskip
Next, we apply the same scheme to eliminate the entries in
the lower half except for the $(N/2+1,1)$ entry, which will be eliminated last. 
This is done by changing the $(c_{n-1} \cdots c_1)$-gates in the
$(n-1)$-qubit case to $(c_n \cdots c_1)$-gates, where 
\begin{equation}\label{cn}
c_n = \begin{cases} 1 & \hbox{ none of } c_{n-1},\dots, c_1 \hbox{ equals } 1,\cr
* & \hbox{ otherwise}. \cr
\end{cases}
\end{equation}
The $(c_n\dots c_1)$-gate constructed in this way will ensure that
the $(1,1)$ entry will not interact with $(2,1),\ldots (N/2,1) $ entries  when we annihilate the 
$(N/2+j,1)$ entry for $j = 2, \dots, N/2$ because 
$1 \in \{c_n, \dots, c_1\}$.
Finally, apply a $(V * \cdots *)$-gate to annihilate the 
$(N/2+1,1)$ entry.

An easy inductive argument will verify that the $(c_n\cdots c_1)$-gates used in
Column 1 satisfy $c_n, \dots, c_1\in \{*,1,V\}$ with $c_1 \ne 1$.

The annihilation steps of Column 1 can be summarized in the following.

\newpage
\baselineskip 15.2pt
\medskip
\centerline{\bf Procedure 2.1}.

\begin{center}
$\begin{array}{|l|}
\hline
\\
\hbox{Suppose in the } (n-1)-\hbox{qubit case, the off-diagonal
entries in the first column are eliminated in the order of }\\ \\
\hskip 1in 
(b_1,1), \dots, (b_{N/2-1},1) \quad \hbox{ by } \quad 
{\bf C_1}-\hbox{gate}, \dots  
{\bf C_{N/2-1}}-\hbox{gate}. \\ \\
\hbox{Eliminate the entries in the upper half of the Column 1
in the order of }
\\ \\
\hskip 1in 
(b_1,1), \dots, (b_{N/2}-1,1) \quad \hbox{ by } 
\quad (*{\bf C_1})-\hbox{gate}, \dots,
(*{\bf C_{N/2-1}})-\hbox{gate}.
\\ \\
\hbox{For } \mathbf{C}=(c_{n-1} \cdots c_1) \hbox{ let }
\cG(\mathbf {C}) = (c_n c_{n-1} \cdots c_1) \hbox{ with }
c_n \hbox{ satisfying (\ref{cn})}. \\ \\
\hbox{Eliminate the entries in the lower half of the column
in the order of }
\\ \\
\hskip 1in
(d_1,1), \dots, (d_{N/2-1},1) 
\quad \hbox{ by } \quad {\cal G}({\bf C_1})-\hbox{gate}, 
\dots   {\cal G}({\bf C_{N/2-1}})-\hbox{gate}, \\ \\
\hbox{where }  d_i = b_i + N/2 \hbox{ for } i = 1, \dots, N/2-1, and
\hbox{ eliminate the } (N/2+1,1) \hbox{ entry by a }
(V*\cdots*)-\hbox{gate}.  \\  \\
\hline 
\end{array}
$
\end{center}

\bigskip\noindent
{\bf Step 2.2} 
For column $\ell$ with $2 \le \ell \le N/2$, 
we can use the same scheme as that of the $(n-1)$-qubit case to 
eliminate the off-diagonal entries in 
the upper half. 
Then we can adapt the scheme for eliminating the 
entries in the lower half of Column 1 to other columns.
To this end, we need to modify 

\begin{itemize}
\item[(a)] the order of the elimination of the
entries in the lower half so that the last entry 
in the lower half  will be eliminated by the $(\ell,\ell)$ entry.

\item[
(b)] the control gates used to do the elimination so that

\ \ 
(b.i) they
will not affect the zero entries obtained in the previous steps; and 

\ \
(b.ii) they will annihilate the entries in the order prescribed in (a). 
\end{itemize}
To achieve (a) and (b), identify
$k \in \{1, \dots, 2^{n}\}$ with the binary sequence
$\tilde k_n \cdots \tilde k_1
\in \{\underbrace{0\cdots 0}_n,  
\dots, \underbrace{1\cdots 1}_n\}$ so that
$$k = \sum_{j=1}^n \tilde k_j 2^{j-1} + 1.$$ 
For (a), if we annihilate the entries in the lower half of Column 1 
in the order of $(d_1, 1), \dots, (d_{N/2}, 1)$, then we will annihilate the entries
in the lower half  of column $\ell$ in the order of
$${(d_1 \oplus \ell, \ell), \dots, (d_{N/2} \oplus \ell, \ell)},$$
where the binary sequence of $d_j \oplus \ell$ is obtained by
entry-wise addition $\oplus$ (without carried digits) 
of the two binary sequences of $d_j$ and $\ell$
such that $0\oplus 0 = 1 \oplus 1 = 0$ and 
$0\oplus 1 = 1 \oplus 0 = 1$.\footnote{
For instance, the binary form of $f_2(d_i)$ is the sum of (using $\oplus$)
the binary sequence  $(0\cdots 01)$ and the binary form of $d_i$; 
the binary form of $f_3(d_i)$ is the sum of 
the binary sequence  $(0\cdots 010)$ and the binary form of $d_i$;
$\dots$, and  
the binary form of $f_{N/2}(d_i)$ is the sum of 
the binary sequence  $(01\cdots 1)$ and the binary form of $d_i$.}
Note that $d_{N/2} = N/2+1$, and hence $d_{N/2} \oplus \ell = N/2+\ell$,
so that $(N/2+\ell,\ell)$ is the last entry in the lower half of Column $\ell$
to be eliminated.

For (b), suppose $2^{m-1} < \ell \le 2^m$ 
with $m \in \{1, \dots, n-1\}$ and 
$\ell = \sum_{j=1}^{n}\tilde \ell_j 2^{j-1} + 1$.
We  adjust the $(c_n\cdots c_1)$-gate used to annihilate
the $(d_i,1)$ entry with $N/2+1 \le d_i < N$ to the 
$(\tilde c_n \cdots \tilde c_1)$-gates for annihilating the 
$(d_i\oplus \ell,\ell)$ entry as follows, where
\begin{equation} \label{tcj}
\tilde c_j = 
\begin{cases} 
1 & \hbox{ if } j = n \hbox{ and none of }  c_n, \dots, c_{m+1} \hbox{ is } 1, \quad 
(\mbox{taking care of (b.i)})\cr
0 & \hbox{ if } 1 \le j \le m \hbox{ and } c_j = \tilde\ell_j = 1, 
\hskip .65in (\mbox{taking care of (b.ii)})\cr 
c_j &  \hbox{ otherwise}. \end{cases}
\end{equation}
Because at least one of 
$\tilde c_n, \dots, \tilde c_{m+1}$ is 1, for $1 \le j \le 2^m$
the $(j,j)$ entries will not interact with other
$(k,j)$ entry with $1 \le k \le N/2$ and $k\ne j$. 

Note also that a $(c_n\cdots c_1)$-gate with
$c_n, \dots, c_1 \in \{*,0,1,V\}$
corresponding to the unitary matrix
$$\tilde V = I_N + V_{n} \otimes\cdots \otimes V_1,$$
where 
$$V_i = \begin{cases} |0\ra \la 0| & \hbox{ if } c_i = 0,\cr
|1\ra \la 1| & \hbox{ if } c_i = 1,\cr
V - I_2 & \hbox{ if } c_i = V,\cr
I_2  & \hbox{ if } c_i = *.\cr
\end{cases}
$$
For the $(c_n\cdots c_1)$-gates used in the first columns, we have
$c_n, \dots, c_1 \in \{*,1,V\}$ with $c_1 \ne 1$.
So, changing the $1$-control in the $c_i$ position whenever $\tilde{\ell}_i=1$ 
in our rule is equivalent to applying a unitary similarity transform to change 
$\tilde V$ to $P_\ell^{t} V P_\ell$, where
$P_\ell$ is the permutation matrix changing the basis
$\{|j_n \cdots j_1\ra: j_r \in \{0,1\}\}$ to
$\{|j_n \dots j_1 \oplus \tilde \ell_n \dots \tilde \ell_1\ra: 
j_r \in \{0, 1\}\}$,
where $\tilde \ell_n \cdots \tilde \ell_1$ is the binary number corresponding to $\ell$.

So, the modified gates can be used  to annihilate 
$(d_j\oplus \ell,\ell)$ entries for $j = 1, \dots, N/2-1$.
After that,
only the $(\ell,\ell)$ and 
$(N/2+\ell,\ell)$ 
entries are nonzero in column 
$\ell$. We annihilate the $({N}/{2}+\ell,\ell)$ entry using the
$(V\hat c_{n-1}\dots\hat c_1)$-gate
to ensure that the annihilation in these steps will not affect the zero 
entries in the previous steps,
where $(\hat c_{n}-1 \cdots \hat c_1)$ is obtained from the 
binary sequence correspondence $(\tilde{\ell}_{n-1}\dots \tilde{\ell}_1)$ 
of $\ell$ by changing all 0 terms to $*$.\footnote{
For example, for Column 2 we change $(c_n \cdots c_1)$  to
$G_2(c_n \cdots c_1)$ by changing only $c_1$ and $c_n$ because $2$ 
corresponds to $0\cdots 01$, and $(\hat c_n \cdots \hat c_1) 
= (V * \cdots * 1)$; for Column 3, we change $(c_n \cdots c_1)$ to 
$G_3(c_n \cdots c_1)$ by
changing only $c_2$ and $c_n$ because $3$ corresponds to $0\cdots 010$, 
and $(\hat c_n \cdots \hat c_1) = (V * \cdots * 1 *)$; for Column 4, we change 
$(c_n \cdots c_1)$ to $G_4(c_n \cdots c_1)$ by
changing only $c_1,c_2$ and $c_n$ because $4$ 
corresponds to $0\cdots 011$, and $(\hat c_n \cdots \hat c_1) 
=(V * \cdots * 1 1)$.}

Note also that except for the last step one will always get the same set of 
$(c_n\cdots c_1)$-gates for the the elimination
of the lower half of the entries in  
Columns $2k-1$ and $2k$ because the modification in (\ref{tcj}) will have the 
same effects in these columns. This follows from the fact that the 
$(c_n \cdots c_1)$-gates for Column 1 satisfy 
$c_n, \dots, c_1 \in \{*,1,V\}$ with $c_1 \ne 1$.

\medskip
The annihilation steps of Column $\ell$ can be summarized in the following.

\medskip
\centerline{\bf Procedure 2.2}.

$\begin{array}{|l|}
\hline
\\
\hbox{Suppose in the } (n-1)-\hbox{qubit case, the off-diagonal
entries in Column $\ell$ are eliminated in the order of }\\ \\
\hskip 1in 
(a_1,\ell), \dots, (a_{N/2-\ell},\ell) \quad \hbox{ by } \quad 
{\bf D_1}-\hbox{gate}, \dots  
{\bf D_{N/2-\ell}}-\hbox{gate}. \\ \\
\hbox{For the } n-\hbox{qubit case,  eliminate the entries in the 
upper half of the column } \hbox{ in the order of }
\\ \\
\hskip .7in 
(a_1,\ell), \dots, (a_{N/2-1},\ell) \quad \hbox{ by } 
\quad (*{\bf D_1})-\hbox{gate}, \dots  
(*{\bf D_{N/2-\ell}})-\hbox{gate}.
\\ \\
\hbox{For } \mathbf{C}=(c_{n-1} \cdots c_1) \hbox{ let }
{G_\ell}(\mathbf{C}) =  (\tilde c_n \cdots \tilde c_1)
\hbox{  satisfy  (\ref{tcj})}, 
\hbox{ and let } d_i \hbox{ and } {\cal G}({\bf C}_i) 
\hbox{ be defined as in Procedure 1.1}.  
\\ \\
\hbox{Eliminate the entries in the lower half of the column
in the order of }
\\ \\
\hskip .7in
(d_1\oplus \ell, \ell), \dots, (d_{N/2-1+\ell},\ell) 
\quad \hbox{ by } \quad {G_\ell({\cal G}{(\bf C_1}))-\hbox{gate}, 
\dots   G_\ell({\cal G}({\bf C_{N/2-1}}))-\hbox{gate}}; \\ \\
\hbox{eliminate the } (N/2+\ell,\ell) \hbox{ entry by a }
(V\tilde c_{n-1}\cdots \tilde c_1)-\hbox{gate, where} 
(\hat c_{n-1} \cdots \hat c_1) \hbox{ is obtained from the binary} \\ \\

\hbox{sequence correspondence }
(\tilde{\ell}_{n-1}\dots \tilde{\ell}_1) \hbox{ of } \ell \hbox{ by changing 
 all 0 terms to } *.  \\ \\ 
\hline
\end{array}
$

\newpage
\bigskip\noindent
Several remarks concerning Procedures 1 and 2 are in order.
\begin{enumerate}
\item
In Column 1, it is easy to determine the order 
of the entries to be eliminated and the $(c_n \cdots c_1)$-gates used.

\item  For the lower half of Column $\ell$ with $2 \le \ell \le N/2$, we change the
order of entries to be eliminated to
$(d_1 \oplus \ell,\ell),  \dots, (d_{N/2}\oplus \ell, \ell)$, and 
change the $(c_n\cdots c_1)$-gates to $G_\ell(c_n\cdots c_1)$-gates.

\item The $(c_n \cdots c_1)$-gates used in Column 1 satisfy 
$c_n, \dots, c_1 \in \{*,1,V\}$ with  $c_1 \ne 1$.

\item The $(c_n\cdots c_1)$-gates used to eliminate the entries
in the lower half of Column $2k-1$ and $2k$ 
are always the same before the last step, for $k = 1,\dots, N/4$.

\item The $(c_n\cdots c_1)$-gates used in the last steps of 
Columns $1, \dots, N/2$ 
satisfy $c_n = V$, and $(c_{n-1}\cdots c_1)$ is obtained from the
binary sequences 
$(0\cdots 0), \dots, (1\cdots 1)$ 
of length $n-1$ by replacing  $0$ with $*$.
\end{enumerate}

The recurrence scheme easy to do. Even the most non-trivial steps of
adapting the procedures of eliminating the entries in the lower half
of the first column to other columns are quite straight forward.
We illustrate this for the case $n=4$.

\medskip\noindent
\large\textbf{Four qubit case, lower left block}\normalsize\vspace{0.2cm}

\medskip
\noindent
\textbf{Col 1, steps 8-15}\qquad
$\begin{array}{|c|c|c|c|c|c|c|c|c|}
\hline
\hbox{ entries } & \mbox{ \footnotesize\tt (10,1)} & \mbox{ \footnotesize\tt (12,1)} & \mbox{ \footnotesize\tt (11,1)} & \mbox{ \footnotesize\tt (14,1)} & \mbox{ \footnotesize\tt (16,1)} & \mbox{ \footnotesize\tt (15,1)} & \mbox{ \footnotesize\tt (13,1)} & \mbox{ \footnotesize\tt (9,1)} \cr\hline 
\hbox{ binary } & \mbox{ \footnotesize\tt 1001} & \mbox{\footnotesize \tt 1011} & \mbox{\footnotesize \tt 1010} & \mbox{\footnotesize \tt 1101} & \mbox{\footnotesize\tt 1111} & \mbox{\footnotesize\tt 1110} & \mbox{\footnotesize\tt 1100} & \mbox{\footnotesize\tt 1000} \cr\hline 
\hbox{ gates } & \mbox{ \footnotesize\tt 1**V} & \mbox{ \footnotesize\tt **1V} & \mbox{ \footnotesize\tt 1*V*} & \mbox{ \footnotesize\tt *1*V} & \mbox{ \footnotesize\tt **1V} & \mbox{ \footnotesize\tt *1V*} & \mbox{ \footnotesize\tt 1V**} & \mbox{ \footnotesize\tt V***} \cr\hline
\end{array}$\vspace{0.3cm}\\
\textbf{Col 2, steps 7-14}\qquad
$\begin{array}{|c|c|c|c|c|c|c|c|c|}
\hline
\hbox{ entries } & \mbox{ \footnotesize\tt (9,2)} & \mbox{ \footnotesize\tt (11,2)} & \mbox{ \footnotesize\tt (12,2)} & \mbox{ \footnotesize\tt (13,2)} & \mbox{ \footnotesize\tt (15,2)} & \mbox{ \footnotesize\tt (16,2)} & \mbox{ \footnotesize\tt (14,2)} & \mbox{ \footnotesize\tt (10,2)} \cr\hline 
\hbox{ binary } & \mbox{ \footnotesize\tt 1000} & \mbox{ \footnotesize\tt 1010} & \mbox{ \footnotesize\tt 1011} & \mbox{ \footnotesize\tt 1100} & \mbox{ \footnotesize\tt 1110} & \mbox{ \footnotesize\tt 1111} & \mbox{ \footnotesize\tt 1101} & \mbox{ \footnotesize\tt 1001} \cr\hline 
\hbox{ gates } & \mbox{ \footnotesize\tt 1**V} & \mbox{ \footnotesize\tt **1V} & \mbox{ \footnotesize\tt 1*V*} & \mbox{ \footnotesize\tt *1*V} & \mbox{ \footnotesize\tt **1V} & \mbox{ \footnotesize\tt *1V*} & \mbox{ \footnotesize\tt 1V**} & \mbox{ \footnotesize\tt V**1} \cr\hline
\end{array}$\vspace{0.3cm}\\
\textbf{Col 3, steps 6-13}\qquad
$\begin{array}{|c|c|c|c|c|c|c|c|c|}
\hline
\hbox{ entries } & \mbox{ \footnotesize\tt (12,3)} & \mbox{ \footnotesize\tt (10,3)} & \mbox{ \footnotesize\tt (9,3)} & \mbox{ \footnotesize\tt (16,3)} & \mbox{ \footnotesize\tt (14,3)} & \mbox{ \footnotesize\tt (13,3)} & \mbox{ \footnotesize\tt (15,3)} & \mbox{ \footnotesize\tt (11,3)} \cr\hline
\hbox{ binary } & \mbox{ \footnotesize\tt 1011} & \mbox{ \footnotesize\tt 1001} & \mbox{ \footnotesize\tt 1000} & \mbox{ \footnotesize\tt 1111} & \mbox{ \footnotesize\tt 1101} & \mbox{ \footnotesize\tt 1100} & \mbox{ \footnotesize\tt 1110} & \mbox{ \footnotesize\tt 1010} \cr\hline  
\hbox{ gates } & \mbox{ \footnotesize\tt 1**V} & \mbox{ \footnotesize\tt 1*0V} & \mbox{ \footnotesize\tt 1*V*} & \mbox{ \footnotesize\tt *1*V} & \mbox{ \footnotesize\tt 1*0V} & \mbox{ \footnotesize\tt *1V*} & \mbox{ \footnotesize\tt 1V**} & \mbox{ \footnotesize\tt V*1*} \cr\hline
\end{array}$\vspace{0.3cm}\\
\textbf{Col 4, steps 5-12}\qquad
$\begin{array}{|c|c|c|c|c|c|c|c|c|}
\hline
\hbox{ entries } & \mbox{ \footnotesize\tt (11,4)} & \mbox{ \footnotesize\tt (9,4)} & \mbox{ \footnotesize\tt (10,4)} & \mbox{ \footnotesize\tt (15,4)} & \mbox{ \footnotesize\tt (13,4)} & \mbox{ \footnotesize\tt (14,4)} & \mbox{ \footnotesize\tt (16,4)} & \mbox{ \footnotesize\tt (12,4)} \cr\hline 
\hbox{ binary } & \mbox{ \footnotesize\tt 1010} & \mbox{ \footnotesize\tt 1000} & \mbox{ \footnotesize\tt 1001} & \mbox{ \footnotesize\tt 1110} & \mbox{ \footnotesize\tt 1100} & \mbox{ \footnotesize\tt 1101} & \mbox{ \footnotesize\tt 1111} & \mbox{ \footnotesize\tt 1011} \cr\hline 
\hbox{ gates } & \mbox{ \footnotesize\tt 1**V} & \mbox{ \footnotesize\tt 1*0V} & \mbox{ \footnotesize\tt 1*V*} & \mbox{ \footnotesize\tt *1*V} & \mbox{ \footnotesize\tt 1*0V} & \mbox{ \footnotesize\tt *1V*} & \mbox{ \footnotesize\tt 1V**} & \mbox{ \footnotesize\tt V*11} \cr\hline
\end{array}$\vspace{0.3cm}\\
\textbf{Col 5, steps 4-11}\qquad
$\begin{array}{|c|c|c|c|c|c|c|c|c|}
\hline
\hbox{ entries } & \mbox{ \footnotesize\tt (14,5)} & \mbox{ \footnotesize\tt (16,5)} & \mbox{ \footnotesize\tt (15,5)} & \mbox{ \footnotesize\tt (10,5)} & \mbox{ \footnotesize\tt (12,5)} & \mbox{ \footnotesize\tt (11,5)} & \mbox{ \footnotesize\tt (9,5)} & \mbox{ \footnotesize\tt (13,5)} \cr\hline 
\hbox{ binary } & \mbox{ \footnotesize\tt 1101} & \mbox{ \footnotesize\tt 1111} & \mbox{ \footnotesize\tt 1110} & \mbox{ \footnotesize\tt 1001} & \mbox{ \footnotesize\tt 1011} & \mbox{ \footnotesize\tt 1010} & \mbox{ \footnotesize\tt 1000} & \mbox{ \footnotesize\tt 1100} \cr\hline 
\hbox{ gates } & \mbox{ \footnotesize\tt 1**V} & \mbox{ \footnotesize\tt 1*1V} & \mbox{ \footnotesize\tt 1*V*} & \mbox{ \footnotesize\tt 10*V} & \mbox{ \footnotesize\tt 1*1V} & \mbox{ \footnotesize\tt 10V*} & \mbox{ \footnotesize\tt 1V**} & \mbox{ \footnotesize\tt V1**} \cr\hline
\end{array}$\vspace{0.3cm}\\
\textbf{Col 6, steps 3-10}\qquad
$\begin{array}{|c|c|c|c|c|c|c|c|c|}
\hline
\hbox{ entries } & \mbox{ \footnotesize\tt (13,6)} & \mbox{ \footnotesize\tt (15,6)} & \mbox{ \footnotesize\tt (16,6)} & \mbox{ \footnotesize\tt (9,6)} & \mbox{ \footnotesize\tt (11,6)} & \mbox{ \footnotesize\tt (12,6)} & \mbox{ \footnotesize\tt (10,6)} & \mbox{ \footnotesize\tt (14,6)} \cr\hline 
\hbox{ binary } & \mbox{ \footnotesize\tt 1100} & \mbox{ \footnotesize\tt 1110} & \mbox{ \footnotesize\tt 1111} & \mbox{ \footnotesize\tt 1000} & \mbox{ \footnotesize\tt 1010} & \mbox{ \footnotesize\tt 1011} & \mbox{ \footnotesize\tt 1001} & \mbox{ \footnotesize\tt 1101} \cr\hline 
\hbox{ gates } & \mbox{ \footnotesize\tt 1**V} & \mbox{ \footnotesize\tt 1*1V} & \mbox{ \footnotesize\tt 1*V*} & \mbox{ \footnotesize\tt 10*V} & \mbox{ \footnotesize\tt 1*1V} & \mbox{ \footnotesize\tt 10V*} & \mbox{ \footnotesize\tt 1V**} & \mbox{ \footnotesize\tt V1*1} \cr\hline
\end{array}$\vspace{0.3cm}\\
\textbf{Col 7, steps 2-9\phantom{1}}\qquad
$\begin{array}{|c|c|c|c|c|c|c|c|c|}
\hline
\hbox{ entries } & \mbox{ \footnotesize\tt (16,7)} & \mbox{ \footnotesize\tt (14,7)} & \mbox{ \footnotesize\tt (13,7)} & \mbox{ \footnotesize\tt (12,7)} & \mbox{ \footnotesize\tt (10,7)} & \mbox{ \footnotesize\tt \phantom{1}(9,7)} & \mbox{ \footnotesize\tt (11,7)} & \mbox{ \footnotesize\tt (15,7)} \cr\hline
\hbox{ binary } & \mbox{ \footnotesize\tt 1111} & \mbox{ \footnotesize\tt 1101} & \mbox{ \footnotesize\tt 1100} & \mbox{ \footnotesize\tt 1011} & \mbox{ \footnotesize\tt 1001} & \mbox{ \footnotesize\tt 1000} & \mbox{ \footnotesize\tt 1010} & \mbox{ \footnotesize\tt 1110} \cr\hline   
\hbox{ gates } & \mbox{ \footnotesize\tt 1**V} & \mbox{ \footnotesize\tt 1*0V} & \mbox{ \footnotesize\tt 1*V*} & \mbox{ \footnotesize\tt 10*V} & \mbox{ \footnotesize\tt 1*0V} & \mbox{ \footnotesize\tt 10V*} & \mbox{ \footnotesize\tt 1V**} & \mbox{ \footnotesize\tt V11*} \cr\hline
\end{array}$\vspace{0.3cm}\\
\textbf{Col 8, steps 1-8\phantom{1}}\qquad
$\begin{array}{|c|c|c|c|c|c|c|c|c|}
\hline
\hbox{ entries } & \mbox{ \footnotesize\tt (15,8)} & \mbox{ \footnotesize\tt (13,8)} & \mbox{ \footnotesize\tt (14,8)} & \mbox{ \footnotesize\tt (11,8)} & \mbox{ \footnotesize\tt (9,8)} & \mbox{ \footnotesize\tt (10,8)} & \mbox{ \footnotesize\tt (12,8)} & \mbox{ \footnotesize\tt (16,8)} \cr\hline 
\hbox{ binary } & \mbox{ \footnotesize\tt 1110} & \mbox{ \footnotesize\tt 1100} & \mbox{ \footnotesize\tt 1101} & \mbox{ \footnotesize\tt 1010} & \mbox{ \footnotesize\tt 1000} & \mbox{ \footnotesize\tt 1001} & \mbox{ \footnotesize\tt 1011} & \mbox{ \footnotesize\tt 1111} \cr\hline
\hbox{ gates } & \mbox{ \footnotesize\tt 1**V} & \mbox{ \footnotesize\tt 1*0V} & \mbox{ \footnotesize\tt 1*V*} & \mbox{ \footnotesize\tt 10*V} & \mbox{ \footnotesize\tt 1*0V} & \mbox{ \footnotesize\tt 10V*} & \mbox{ \footnotesize\tt 1V**} & \mbox{ \footnotesize\tt V111} \cr\hline
\end{array}$

\newpage
\section{Total Number of Controls and Comparison to a Previous Study}

The following theorem gives the formula for the number  $g_n^k$ 
of $k$-controlled qubit gates used in the recurrence scheme of 
our decomposition for  a unitary matrix $U\in M_{2^n}$,
where $k = 0, 1, \dots, n-1$.

\begin{theorem}\label{counting} 
\begin{enumerate}
\item $g^0_n=n$
\item $g^{n-1}_n=\left\{\begin{array}{ll}
1 & \mbox{ if } n=1\\
4 & \mbox{ if } n=2\\
7+(n-3) & \mbox{ if } n\geq 3\\
\end{array} \right.$
\item $g^k_n=g^k_{n-1}+g^{k-1}_{n-1}+\binom{n-1}{k}$ for all $3\leq k <n-1$
\item $g^1_n=n(n-1)(2^{n-2}+1)$ for all $n\geq 2$

\item $g^2_n=\dfrac{1}{3}(4^n-4)-2^n(n-1)+\dfrac{n(n-1)(n-2)}{2}$ for all $n\geq 3$
\end{enumerate}
\end{theorem}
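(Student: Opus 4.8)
The plan is to sort every gate that the scheme of Section~3 produces into four groups and to track how many controls each group carries, so that $g_n^k$ appears as a sum of four contributions. Fix the $n$-qubit reduction and recall that \textbf{Step 2} treats columns $1,\dots,N/2$ (each split into an upper half, a lower non-final part, and one closing $(V\cdots)$-gate) while \textbf{Step 3} treats columns $N/2+1,\dots,N-1$. By Procedures~2.1--2.2 the upper half of column $\ell$ uses exactly the column-$\ell$ gates of the $(n-1)$-qubit scheme with a leading $*$ prepended; as $\ell$ runs over $1,\dots,N/2$ these reproduce the \emph{entire} $(n-1)$-qubit gate list with an inert $*$, so this group contributes $g_{n-1}^k$ to $g_n^k$. \textbf{Step 3} applies the whole $(n-1)$-qubit scheme with a leading $1$ prepended, turning each $j$-controlled gate into a $(j+1)$-controlled one, hence contributing $g_{n-1}^{k-1}$. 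The closing gate of column $\ell$ is $(V\hat c_{n-1}\cdots\hat c_1)$, whose controls are exactly the $1$'s in the binary string of $\ell-1$, so the number of closing gates with $k$ controls is the number of weight-$k$ strings of length $n-1$, namely $\binom{n-1}{k}$.

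The remaining group, the lower non-final gates $G_\ell(\cG(\mathbf C))$, is the crux. I would first show by the induction underlying the scheme that every column-1 gate of any $m$-qubit scheme has at most one control (the upper half preserves controls, the lower half is $\cG$ applied to such a gate, and the closing gate is control-free); consequently $\cG(\mathbf C)$ always has \emph{exactly} one control. Reading rule~(\ref{tcj}) then shows that $G_\ell$ adds at most one further control (only at position $n$), so this group consists solely of $1$- and $2$-controlled gates. These four tallies already settle parts 1 and 3: control-free gates occur only in the upper-half group ($g_{n-1}^0$ of them) and as the single $(V*\cdots*)$ of column~1, giving $g_n^0=g_{n-1}^0+1=n$; and for $k\ge 3$ the lower non-final group is empty, leaving $g_n^k=g_{n-1}^k+g_{n-1}^{k-1}+\binom{n-1}{k}$. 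Part~2 is the specialization to $k=n-1$: since $g_{n-1}^{n-1}=0$ the recurrence collapses to $g_n^{n-1}=g_{n-1}^{n-2}+1$, which with the directly verified base value $g_3^2=7$ (and the hand-checked cases $n=1,2$) yields $7+(n-3)$.

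The real work is parts 4 and 5, where the lower non-final group must be split by control count. Let $X_n$ and $Y_n$ denote the numbers of its $2$-controlled and $1$-controlled gates, so that $g_n^1=g_{n-1}^1+g_{n-1}^0+(n-1)+Y_n$ and $g_n^2=g_{n-1}^2+g_{n-1}^1+\binom{n-1}{2}+X_n$, with $X_n+Y_n=2^{n-1}(2^{n-1}-1)$. To evaluate $X_n$ I would record the control profile of the $m$-qubit column-1 gates: let $\alpha_m^{(0)}$ count the control-free ones and $\alpha_m^{(p)}$ those whose lone control sits in position $p$. The inductive description of column~1 gives $\alpha_m^{(0)}=\alpha_{m-1}^{(0)}+1$, $\alpha_m^{(p)}=2\alpha_{m-1}^{(p)}$ for $p<m$, and $\alpha_m^{(m)}=\alpha_{m-1}^{(0)}$, solving to $\alpha_m^{(0)}=m$ and $\alpha_m^{(p)}=(p-1)2^{m-p}$. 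Inspecting (\ref{tcj}) once more shows that a column-$\ell$ lower gate is $2$-controlled exactly when its source $(n-1)$-column-1 gate is of single-control type with control position $\le m$, where $2^{m-1}<\ell\le 2^m$; summing over the $2^{m-1}$ columns in each such block gives $X_n=\sum_{p=2}^{n-1}(p-1)2^{\,n-1-p}\bigl(2^{n-1}-2^{p-1}\bigr)$. A geometric-sum evaluation produces $X_n=4^{n-1}-n2^{n-1}-(n-1)(n-2)2^{n-3}$ and hence $Y_n=(n-1)(n+2)2^{n-3}$. Substituting into the two refined recurrences and verifying the proposed closed forms by induction completes parts 4 and 5: the base cases $g_2^1=4$ and $g_3^2=7$ are immediate, and the inductive steps reduce to the identities $g_n^1-g_{n-1}^1=2(n-1)+Y_n$ and $g_n^2-g_{n-1}^2=g_{n-1}^1+\binom{n-1}{2}+X_n$.

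I expect the main obstacle to be the bookkeeping for the lower non-final group: one must argue carefully, straight from rules (\ref{cn}) and (\ref{tcj}), both that $\cG$ yields exactly one control and that $G_\ell$ adds at most one, and then track the \emph{position} of that single control through the induction to obtain the profile $\alpha_m^{(p)}$. Once $X_n$ is in closed form, everything else is routine algebra.
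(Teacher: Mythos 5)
Your proposal is correct and follows essentially the same route as the paper: the same four-way split of the gate count (the paper's $A_n^k+B_n^k+C_n^k+D_n^k$ from Remark \ref{count}), the same structural facts (lower-half column-1 gates carry exactly one control, $G_\ell$ adds at most one more, closing gates contribute $\binom{n-1}{k}$), and the same telescoping of the resulting recurrences for parts 4 and 5. The only cosmetic differences are that you obtain the control-position profile $\alpha_m^{(p)}=(p-1)2^{m-p}$ by a recurrence where the paper's Lemmas \ref{lemcol1} and \ref{lem1} read it off an explicit binary-index description, and that you compute the two-control count $X_n=B_n^2$ directly and recover $B_n^1$ by complementation while the paper sums the one-control counts of Lemma \ref{lem2} to get $B_n^1$ directly; both give $B_n^1=(n-1)(n+2)2^{n-3}$ and identical final computations.
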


Note that $\sum\limits_{k=0}^{n-1}g_n^k=2^{n-1}(2^n-1)=N(N-1)/2$. By convention $g_1^0=1$. In general, if $n>1$, \[g_n^k=A_n^k+B_n^k+C_n^k+D_n^k,\] where $A_n^k$ is the number 
of $g_n^k$ gates used to annihilate entries in the upper left block of the matrix, $B_n^k$ is the 
number of $g_n^k$ gates used to annihilate entries of the lower half of columns $1,\ldots, 2^{n-1}$ 
excluding the entries of the form $(N/2+\ell,\ell)$. The number $C_n^k$ is the number of $g_n^k$ 
gates used to annihilate entries $(N/2+\ell,\ell)$, where $\ell\in\{1,\ldots, 2^{n-1}\}$. Finally 
$D_n^k$ is the number of $g_n^k$ gates used to annihilate the lower right block entries of the 
matrix. For example, we saw in section 2 that  
\[g_2^0=2=1+0+1+0 \qquad \mbox{ and } \qquad  g_2^1=4=0+2+1+1\] 
and 
\[g_3^0=3=2+0+1+0, \qquad g_3^1=18=4+10+2+2, \quad \mbox{ and } \quad g_3^2=0+2+1+4\]

\begin{remarks}\label{count} Immediately, we can see the following recursive properties.
\begin{enumerate}
\item $A_n^k=g_{n-1}^k$ for $k\in\{0,\ldots, n-2\}$ and $A_n^{n-1}=0$ as illustrated in tables 2 and 3. 
\item $D_n^k=g_{n-1}^{k-1}$ for $k\in\{1,\ldots, n-1\}$ and $D_n^{0}=0$ as seen from table 2 and 5. 
\item $C_n^k=\binom{n-1}{k}$ for $k\in\{0,\ldots,n-1\}$, because $C_k^n$ is the number of column indices $\ell$, with $1\leq \ell \leq 2^{n-1}$, such that the binary sequence of $\ell$ of length $n$ has exactly $k$ digits equal to $1$.
\item Observe that the gate 
$\mathbf{\cG_i}=\cG(\mathbf{C}_i)$, $1\leq i\leq \frac{N}{2}-1$, in table 1 
has exactly one $1$-control. All other gates accounted for by $B_n^k$ are obtained 
from the $\mathbf{\cG}_i$'s via the transformation 
$G_{\ell}$, for $2\leq \ell\leq \frac{N}{2}$. But notice that $G_{\ell}
(\mathbf{\cG}_i)$ either has the same number of controls as $\mathbf{\cG}_i$ or 
has one more control than $\mathbf{\cG}_i$. Hence $B_n^k=0$ for $k>2$ and 
$B_n^1+B_n^2=2^{n-1}(2^{n-1}-1)$.
\end{enumerate}
\end{remarks}

Let us observe the recursive scheme for the first column (see Table 1). The following lemma can be proven inductively from this scheme.

\begin{lemma}\label{lemcol1} If 
\[i=2^{s_1-1}+\sum\limits_{m=1}^{j}(2^{s_m-1}-1), \mbox{ where } 
1\leq s_1<s_2<\cdots <s_j\leq n-1 \mbox{ and } 1\leq j\leq n-1\] then 
\begin{equation}\label{ann1}
b_i=1+\sum\limits_{m=1}^{j}2^{s_m-1}, \quad \mbox{ and } \quad \mathbf{C}_i
=(*\cdots * c_{s_2}*\cdots * c_{s_1}* \cdots *),\end{equation} 
where $(c_{s_2},c_{s_1})=(*,V)$ when $j=1$, otherwise $(c_{s_2},c_{s_1})=(1,V)$.
\end{lemma}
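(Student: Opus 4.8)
The plan is to prove the lemma by induction on the number of qubits, using the recursive construction of the first–column scheme recorded in Procedure 2.1. Since the statement describes gates $\mathbf{C}_i=(c_{n-1}\cdots c_1)$ of length $n-1$ whose controls occupy positions in $\{1,\dots,n-1\}$, it is cleanest to prove the analogous assertion for the first–column scheme of a $p$-qubit gate for every $p\ge 1$ (so gates have length $p$ and $s_j\le p$) and to read off the lemma as the case $p=n-1$. The object to set up first is the correspondence $S=\{s_1<\cdots<s_j\}\mapsto i(S)=2^{s_1-1}+\sum_{t=1}^{j}(2^{s_t-1}-1)$ between nonempty subsets of $\{1,\dots,p\}$ and the step indices $1,\dots,2^p-1$, together with the reading of $b_{i}-1=\sum_{t=1}^{j}2^{s_t-1}$ as the integer whose binary expansion has $1$'s exactly in positions $s_1,\dots,s_j$. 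The base case $p=1$ is immediate: the only nonempty subset is $\{1\}$, giving $i=1$, $b_1=2$ and the control–free gate $(V)$.

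For the inductive step I would partition the nonempty subsets of $\{1,\dots,p\}$ into three classes matching the three phases of Procedure 2.1. \emph{Class A} consists of the subsets with $s_j\le p-1$; here $i(S)$ does not involve $p$, so $i_p(S)=i_{p-1}(S)$, $b_i\le 2^{p-1}$, and this is the upper–half step that reuses the $(p-1)$-qubit gate prefixed by a $*$. \emph{Class B} consists of the subsets containing $p$ with $j\ge 2$; writing $S'=S\setminus\{p\}$ one computes $i_p(S)=i_{p-1}(S')+(2^{p-1}-1)$, so this is a lower–half step obtained from the $(p-1)$-qubit step $i_{p-1}(S')$ with its row shifted by $+2^{p-1}$ (matching $b_i=b_{i_{p-1}(S')}+2^{p-1}$) and its gate transformed by $\cG$. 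The remaining subset $\{p\}$ has $i(\{p\})=2^p-1$, $b_i=2^{p-1}+1=N/2+1$, and is the final $(V*\cdots*)$-step. A short count confirms the three classes have sizes $2^{p-1}-1$, $2^{p-1}-1$ and $1$, so they exhaust $\{1,\dots,2^p-1\}$ bijectively, and the elimination order within each class is inherited from the $(p-1)$-qubit scheme.

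The delicate point — the step I expect to require the most care — is verifying that the transformation $\cG$ in Class B produces exactly the control prescribed by the formula, i.e.\ correctly distinguishing $(c_{s_2},c_{s_1})=(1,V)$ from the $j=1$ gate. Here one uses the inductive hypothesis in full strength: it says that the $(p-1)$-qubit gate attached to $S'$ carries a $1$-control precisely when $|S'|=j-1\ge 2$, and that this control sits at $s_2$. Consequently rule (\ref{cn}) defining $\cG$ sets the new top entry (position $p$) to $1$ when $j=2$ (the subgate has no $1$-control, and $p=s_2$, so the single $1$-control at $s_2$ is correctly created) and to $*$ when $j\ge 3$ (the subgate already has its $1$-control at $s_2$, which is preserved). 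In both subcases the resulting length-$p$ gate has $V$ at $s_1$, a $1$ at $s_2$ and $*$ elsewhere; for Class A the prefixed $*$ leaves the $s_1,s_2$ controls untouched. This case analysis on $j$, driven by the ``exactly one $1$-control, located at $s_2$, iff $j\ge 2$'' invariant, is the only real content, the index identities being routine bookkeeping once the subset correspondence is fixed.
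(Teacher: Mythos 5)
Your proof is correct and follows essentially the same route as the paper, which only remarks that the lemma ``can be proven inductively from this scheme'' (Procedure 2.1): your three classes (subsets avoiding $p$, subsets containing $p$ with $j\ge 2$, and $\{p\}$) correspond exactly to the upper-half, lower-half, and final $(V*\cdots*)$ phases of that recursion, and your invariant that the gate for $S$ has a single $1$-control at $s_2$ precisely when $j\ge 2$ is what makes the rule (\ref{cn}) case analysis go through. In effect you have supplied the induction the paper leaves implicit, including the bijectivity of $S\mapsto i(S)$ that the paper never spells out.
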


\begin{lemma}\label{lem1}
Let $\mathbf{\cG}_1,\ldots, \mathbf{\cG}_{N/2-1}$ be as in remark 4.2.4. 
Suppose $\mathbf{\cG}_i$ is a $(c_{in}\ldots c_{i1})$-gate. Then the following 
holds
\[\#\{i| c_{ik}=1\}=\left\{\begin{array}{ll}
n-1 & \mbox{ when } k=n,\\
2^{n-k-1}(k-1) & \mbox{ otherwise}.
\end{array} \right. \]
\end{lemma}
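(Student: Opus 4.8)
The plan is to read off the conclusion directly from the explicit description of the first-column gates in Lemma~\ref{lemcol1} together with the definition of the map $\cG$, reducing the whole statement to an elementary count of subsets of $\{1,\dots,n-1\}$.

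First I would assemble the two ingredients. By Lemma~\ref{lemcol1}, each core gate $\mathbf{C}_i$ is indexed by a nonempty subset $S=\{s_1<\cdots<s_j\}\subseteq\{1,\dots,n-1\}$: it carries the target $V$ in position $s_1$, a single $1$-control in position $s_2$ when $j\ge 2$, and a $*$ in every other position (so for $j=1$ it has no $1$-control at all). By the rule (\ref{cn}) defining $\cG$, the prepended top digit $c_{in}$ equals $1$ precisely when $\mathbf{C}_i$ contains no $1$-control, that is, exactly when $j=1$; otherwise $c_{in}=*$. Since $\cG$ leaves the lower $n-1$ digits untouched, the $1$-controls of $\mathbf{\cG}_i$ in positions $1,\dots,n-1$ are exactly those of $\mathbf{C}_i$ --- namely the lone control in position $s_2$, present iff $j\ge 2$ --- while position $n$ holds a $1$ iff $j=1$. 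In particular each $\mathbf{\cG}_i$ carries exactly one $1$-control, in agreement with Remark~\ref{count}(4).

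With this dictionary the two cases become counting problems over the subsets $S$. For $k=n$ I would count the indices $i$ with $c_{in}=1$, i.e.\ the singletons $S=\{s_1\}$ with $s_1\in\{1,\dots,n-1\}$; there are $n-1$ of them. For $1\le k\le n-1$ a $1$-control sits in position $k$ iff $s_2=k$, which already forces $j\ge 2$; I would then count the subsets of $\{1,\dots,n-1\}$ whose second-smallest element is $k$. The smallest element $s_1$ may be any of $\{1,\dots,k-1\}$ (giving $k-1$ choices) and the remaining elements $s_3,\dots,s_j$ form an arbitrary subset of $\{k+1,\dots,n-1\}$ (giving $2^{\,n-1-k}$ choices), so the total is $2^{\,n-k-1}(k-1)$, which is exactly the claimed value and automatically vanishes at $k=1$.

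The argument is essentially mechanical once Lemma~\ref{lemcol1} is available, so I do not anticipate a genuine obstacle; the only point demanding care is the bookkeeping --- recognizing that the sole $1$-control of $\mathbf{C}_i$ always lands in position $s_2$ and that $\cG$ can create an extra $1$ only in the top slot $n$, never in positions $1,\dots,n-1$. As a final sanity check I would confirm that the totals agree, namely $\sum_{k=1}^{n}\#\{i\mid c_{ik}=1\}=(n-1)+\sum_{k=1}^{n-1}2^{\,n-k-1}(k-1)=2^{\,n-1}-1=N/2-1$, matching the fact that the $N/2-1$ gates $\mathbf{\cG}_i$ each carry exactly one $1$-control.
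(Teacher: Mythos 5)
Your proof is correct and takes essentially the same route as the paper: both reduce the statement to Lemma \ref{lemcol1} and count subsets of $\{1,\dots,n-1\}$ by their second-smallest element, getting the $n-1$ singletons for $k=n$ and the $(k-1)2^{n-k-1}$ subsets with second-smallest element $k$ for $k<n$. The only cosmetic difference is that you track the prepended top bit explicitly via the rule (\ref{cn}), whereas the paper absorbs it into the index set by setting $s_j=n$; your final consistency check that the counts sum to $N/2-1$ is a small bonus not in the paper.
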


\begin{proof} We want to know how many of the $\mathbf{\cG}_i$'s have a $1$-control 
in the $k^{th}$ bit. By Lemma \ref{lemcol1}, we know that the 
$\mathbf{\cG}_i$'s satisfying this annihilate entries $b_i$ of the form given in 
equation (\ref{ann1}), where $s_2=k$ and $s_j=n$. If $k=n$, then $j=2$ and thus we have 
$(n-1)$ choices for $s_1$. If $k<n$, we have $k-1$ choices for $s_1$ and we are free to 
choose which ones in $\{2^{k+1},\ldots, 2^{n-1}\}$ to include in the sum defining 
$b_i$. The conclusion then follows 
\end{proof}

Next, let us look at the gates used to annihilate entries of column 
$\ell\in\{1,\ldots, \frac{N}{2}\}$ that contribute to $B_n^1$.
 
\begin{lemma} \label{lem2} Let $2^{m-1}< \ell\leq 2^m$ with $1\leq m \leq n-1$ 
and $\mathbf{\cG}_1,\ldots, \mathbf{\cG}_{\frac{N}{2}-1}$ be as in 
Lemma \ref{lem1}. Then 
$$\#\{i| G_{\ell}(\mathbf{\cG}_i) \mbox{ has  exactly one control }\}  
=\left\{\begin{array}{ll} n-1 & \mbox{ if } m=n-1,\\
(n-1)+\sum\limits_{k=m+1}^{n-1} 2^{n-k-1}(k-1) & \mbox{ otherwise}. 
\end{array} \right.$$
\end{lemma}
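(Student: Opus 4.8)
The plan is to reduce the count to a sum of the quantities already computed in Lemma \ref{lem1}, after pinning down exactly when the transformation $G_\ell$ changes the number of controls of a gate $\mathbf{\cG}_i$. I would begin by recording two structural facts. First, by Remark \ref{count}(4) each $\mathbf{\cG}_i$ carries exactly one control, and it is a $1$-control; moreover, by the construction $\cG$ in (\ref{cn}) the bit $n$ entry of $\mathbf{\cG}_i$ always lies in $\{*,1\}$ and is never the target $V$, so inserting a $1$-control in bit $n$ is always well defined. Second, inspecting (\ref{tcj}) shows that the second clause of $G_\ell$ merely converts a $1$-control into a $0$-control for some $j\le m$, which leaves the total number of controls unchanged; hence the only mechanism by which $G_\ell$ can alter the control count is the first clause, which inserts a $1$-control in bit $n$ precisely when none of $c_n,\dots,c_{m+1}$ is already a $1$-control.

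With this in hand, the heart of the argument is a dichotomy on the bit $k$ carrying the unique $1$-control of $\mathbf{\cG}_i$. If $k\ge m+1$, then $c_k=1$ lies among $c_n,\dots,c_{m+1}$, so the first clause does not fire; and since $k>m$ the second clause does not touch bit $k$; hence $G_\ell(\mathbf{\cG}_i)$ still has exactly one control. If instead $k\le m$, then none of $c_n,\dots,c_{m+1}$ is a $1$-control, so the first clause inserts a new $1$-control in bit $n$, and the original control (possibly flipped to $0$ by the second clause, but still a control) survives, giving two controls. Therefore $G_\ell(\mathbf{\cG}_i)$ has exactly one control if and only if the single $1$-control of $\mathbf{\cG}_i$ sits in some bit $k$ with $m+1\le k\le n$.

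It then remains to count such $i$. Since each $\mathbf{\cG}_i$ has its unique $1$-control in exactly one bit, the events are mutually exclusive across $k$, so the desired total is $\sum_{k=m+1}^{n}\#\{i: c_{ik}=1\}$. Substituting the values from Lemma \ref{lem1}, the $k=n$ term contributes $n-1$ and each remaining term contributes $2^{n-k-1}(k-1)$, yielding $(n-1)+\sum_{k=m+1}^{n-1}2^{n-k-1}(k-1)$; when $m=n-1$ the residual sum is empty and the count collapses to $n-1$. This matches both cases of the asserted formula.

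The main obstacle, and the only place demanding genuine care, is the control-count bookkeeping for $G_\ell$ in the first two paragraphs: one must verify that the $1\to 0$ flip in (\ref{tcj}) never removes a control and that the insertion in bit $n$ is the sole source of additional controls, so that the clean dichotomy on $k$ is valid and independent of the particular bits $\tilde\ell_j$ of $\ell$. Once that is settled, the conclusion is an immediate appeal to Lemma \ref{lem1}.
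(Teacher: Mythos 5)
Your proposal is correct and follows essentially the same route as the paper: both reduce the problem to counting the gates $\mathbf{\cG}_i$ whose unique $1$-control lies in a bit $k$ with $m+1\le k\le n$, then sum the disjoint counts from Lemma \ref{lem1}. The only difference is that you carefully derive from (\ref{cn}) and (\ref{tcj}) the dichotomy that $G_\ell$ preserves the control count exactly when the $1$-control sits above bit $m$, whereas the paper simply asserts this fact before invoking Lemma \ref{lem1}.
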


\begin{proof} 
If $2^{m-1}<\ell\leq 2^m$, then $\ell = \sum_{j=1}^{m}\tilde \ell_j 2^{j-1} + 1$. Recall that $G_{\ell}(\mathbf{\cG}_i)$ has exactly one control if $\mathbf{\cG}_i=(c_{in},\ldots,c_{i1})$ has its one $1$-control in $\{c_{i(m+1)},\ldots, c_{in}\}$. Thus 
\[\#\{i| G_{\ell}(\mathbf{\cG}_i) \mbox{ has  exactly one control }\}=\bigcup_{k=m+1}^{n}\#\{i| c_{ik}=1\} \]
The conclusion follows from Lemma \ref{lem1}. 
\end{proof}

\noindent
{\bf Proof of Theorem \ref{counting}}
\begin{enumerate}
\item A control-free gate can only be utilized in Column 1. This is because 
when we transform the matrix to the form $[1]\oplus U'$, 
the succeeding gates must make  sure that the first row does not interact 
with other rows.  As mentioned in Lemma \ref{lemcol1} and 
illustrated in Table 1, these gates with no control are the gates 
that annihilate the 
entries of the form $(1+2^{s_m},1)$ for $m\in \{1,\dots n\}$. Indeed $g^0_n=n$.
\item We have shown that $g^0_1=1$, $g^1_2=4$ and $g^2_3=7$. 
From Remark \ref{count} we 
deduce that $g_n^{n-1}=\binom{n-1}{n-1}+g_{n-1}^{n-2}$ for all $n\geq 4$ and  
hence 
\[g^{n-1}_{n}=1+g^{n-2}_{n-1}=(n-3)+g^{2}_{3}=(n-3)+7.\]
\item Now, assume $n-1>k\geq 3$. From Remark \ref{count}, we get 
$g^k_n=g_{n-1}^k+\binom{n-1}{k}+0 + g_{n-1}^{k-1}$.
\item When $n=2$, we know that that $g^1_2=4=2(2-1)(2^0+1)$. 

Now, assume $n>2$. From Remark \ref{count}, $g^1_n=g_{n-1}^1+B_n^1+(n-1)+g_{n-1}^0$. 
Let us look at the summation defining $B_n^1$. From Remark \ref{count}.4, Column 1 
contributes $\frac{N}{2}-1=2^{n-1}-1$ gates to $B_n^1$.  From 
Lemma \ref{lem2}, we deduce  that 
\begin{equation}\label{B1}
\begin{array}{lcl}
B_n^1 & = &\left(2^{n-1}-1\right)+ 2^{n-2}(n-1) + 
\sum\limits_{m=1}^{n-2}2^{m-1}\left[(n-1)+ \sum\limits_{k=m+1}^{n-1}2^{n-k-1}(k-1) 
\right]\\
& = & \left(2^{n-1}-1\right)n+\left[2^{n-3}n(n-3)-2^{n-2}+n\right] = 2^{n-3}(n+2)(n-1).
\end{array}
\end{equation}
Thus $g^1_n-g^1_{n-1}=2(n-1)+2^{n-3}(n+3)(n+2)(n-1)$. Using a telescoping sum, we get 
\[\begin{array}{lcl}
g^1_n & = & g^1_2+\sum\limits_{m=3}^{n}\left[2(m-1)+2^{m-3}(m+3)(m+2)(n-1)\right]\\
& = & (2^{n-2}+1)(n)(n-1).
\end{array}
\]
\item If $n=3$, $g^2_{3}=7=\frac{1}{3}(4^3-4)-2^3(3-1)+\frac{3\cdot 2 \cdot 1}{2}$. 
Now, assume $n>3$. From Remark \ref{count} and equation (\ref{B1}),
\[g^2_{n}=g^2_{n-1}+g^1_{n-1}+\binom{n-1}{2}+2^{n-1}(2^{n-1}-1)-2^{n-3}(n+2)(n-1).\]
Then 
\[\begin{array}{lcl}
g^2_{n}-g^2_{n-1} & = & (2^{n-3}+1)(n-1)(n-2) +\dfrac{(n-2)(n-1)}{2} +
2^{n-1}(2^{n-1}-1)-2^{n-3}(n+2)(n-1)\\
& = & 2^{n-1}(2^{n-1}-n)+\frac{3}{2}(n-2)(n-1).
\end{array}
\]
And hence
\[\begin{array}{lcl}
g^2_{n}& = & g^2_{3} + \sum\limits_{m=4}^{n}\left[2^{m-1}(2^{m-1}-m)
+\frac{3}{2}(m-2)(m-1)\right]\\
& = & \dfrac{1}{3}(4^n-4)-2^n(n-1)+\dfrac{n(n-1)(n-2)}{2}.
\end{array}\]
\end{enumerate}
\vskip -.3in {\hfill $\Box$}

\vskip .5in

In \cite{fin}, the Gray code basis was utilized to achieve the same goal of this paper. 
Let us denote the total number of gates with $k$ controls in the decomposition scheme 
presented in \cite{fin} by $\mathbf{g}_n^{k}$. The recursion formula presented in the 
said study is 
\[\mathbf{g}_n^{k}=\mathbf{g}_{n-1}^{k}+\mathbf{g}_{n-1}^{k-1}+\mbox{max}(2^{n-2},2^k)+(2^{2n-k-2}-2^{n-2})\quad \mbox{ (for  $k\geq 1$)}\]
with the conditions that $\mathbf{g}_n^0=2^{n-1}$ and $\mathbf{g}_n^n=0$ for all $n$. Let us compare values for small $n$.

{\small
$$\begin{array}{|c|c|c|c|c|c|c|}
\hline n & g^0_n\ /\ \mathbf{g}^0_n & g^1_n\ /\ \mathbf{g}^1_n & g^2_n\ /\ \mathbf{g}^2_n & g^3_n\ /\ \mathbf{g}^3_n & g^4_n\ /\ \mathbf{g}^4_n & T_1(n)\ /\ T_2(n)\\\hline 
1 & 1 \ /\ 1  & - & - & - & - & 0\ /\ 0 \\\hline
2 & 2 \ /\ 2  & 4 \ /\ 4 & - & - & - & 4\ /\ 4 \\\hline
3 & 3 \ /\ 4  & 18 \ /\ 14 & 7 \ /\ 10 & - & - & 32\ /\ 34 \\\hline
4 & 4\ /\ 8   & 60 \ /\ 50 & 48 \ /\ 40 & 8 \ /\ 22 & - & 180 \ /\ 196 \\ \hline
5 & 5 \ /\ 16 & 180 \ /\ 186 & 242 \ /\ 154 & 60 \ /\ 94 & 9 \ /\ 46 & 880 \ /\ 960\\ \hline
\end{array}$$
}

\noindent
\begin{minipage}{90mm}
Here, $T_1(n)$ (respectively, and $T_2(n)$) is the total number of controls in 
the decomposition of a unitary $U\in M_{2^n}$ using the scheme in this paper 
(respectively, the scheme in \cite{fin}). Starting from $n=3$, we get a small 
advantage in our decomposition and because both methods are recursive, the 
discrepancy becomes large as $n$ gets larger. For example, $T_2(10)-T_1(10)=30,720$.

\medskip
In Figure 1, we plot the difference between $T_2$ and $T_1$ for $n$ 
from 1 to 50. We use the log scale in the $y$-axis.

\vskip .3in \ \
~
\end{minipage} {\hfill
\begin{minipage}{70mm}
\begin{center}
\includegraphics[scale=0.35]{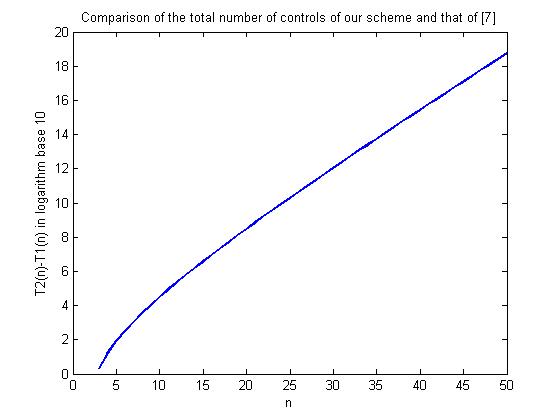}

{\bf Figure 1}
\end{center}
\end{minipage}
}

\section{Concluding Remarks and Future Research}

In this paper, we present a recurrence scheme for generating
controlled single qubit unitary gates 
$U_1, \dots, U_r$ with $r \le N(N-1)/2$ 
such that $U_r \cdots U_1 U = I_N$.
Consequently, $U = U_1^{\dag} \cdots U_r^{\dag}$. We have the following.

\medskip
\noindent{\bf Recurrence scheme}

\medskip
$\begin{array}{|l|}
\hline
\\
\hbox{
{\bf Step 1}
Partition } U \in M_n \hbox{ into a } 2\times 2 \hbox{
block matrix with each block lying 
in } M_{N/2}, \hbox{ where } N = 2^n. \\  
\\
\hbox{
{\bf Step 2} Use the scheme of the }
(n-1)-\hbox{ qubit case to help reduce } U \hbox{ to the form  }
I_{N/2} \oplus \tilde U \hbox{  with } \tilde U \in M_{N/2}.
\\ \\

\   \hbox{{\bf Step 2.1} For Column 1, use Prcedure 2.1 in Section 3.}
\\ \\
\  \hbox{{\bf Step 2.2} For Column } \ell \hbox{ with }
2 \le \ell \le N/2, \hbox{ use Procedure 2.2 in section 3.} \\ \\

\hbox{ {\bf Step 3} Apply the scheme of the }
(n-1)-\hbox{qubit case to transform } \tilde U \hbox{ to }I_{N/2}. \\ \\

\hline
\end{array}$

\medskip\noindent
It is worth noting that one can actually describe the entire
recursive scheme in terms of the steps used to eliminate the 
off-diagonal entries of the first column as follows.

\begin{itemize}
\item We first generate the $(c_n\cdots c_1)$-gates for eliminating
the off-diagonal entries:

For $n = 1$ use $V$ to eliminate the $(2,1)$ entry;
for $n > 1$ modify the $(c_{n-1}\cdots c_1)$-gates to
$(*c_{n-1}\cdots c_1)$-gates to eliminate the off-diagonal entries
in upper half of Column 1 in the $n$-qubit case,
and ${\cal G}(c_{n-1}\cdots c_1)$-gates to eliminate the 
entries in the lower half.

\item Once, we have the $(c_n\cdots c_1)$-gates for Column 1, 
we can modify them to eliminate the off-diagonal entries for
the leading $2^m\times 2^m$ blocks for $m = 1, \dots, n$,
using Steps 2.1 and 2.2 described in Section 3.
\end{itemize}

We give recursive formulas for the number of controlled
single qubit gates needed in the decomposition.
The total number of controls used in our scheme 
is less than that in \cite{fin}.

For future research, it might be interesting to design other recurrence
schemes, which are easy to implement and use even less controls.
Moreover, there might be other optimality criteria depending on the
physical implementation of qubits.
One may take this into consideration and assign a 
cost $w_k$ for implementing  $k$-controlled single qubit
gates, and then study the optimal decomposition
by minimizing the cost instead of number of controls.

Matlab programs for the decomposition using our scheme is posted 
at http://ckixx.people.wm.edu/mathlib.html.
The  program \emph{decomposition.m}  displays the order of entries to be annihilated, the $(c_nc_{n-1}\cdots c_1)$-gate used and the single qubit gate $V\in M_2$ used for the controlled gates. One types 
\texttt{
[U,A,x,y,controls,num,V]=decomposition(n);}
in the command line, where $n$ is the number of bits and the program will prompt for the user to 
either choose to create a random unitary matrix or input the unitary matrix manually. The variable 
$\mathtt U$ is the unitary matrix being decomposed. the variable $\mathtt A$ is an array of 
strings that describe the form of the gate, and $\mathtt{ (x,y)}$ are the row and column indices 
arranged according to their order of annihilation. The variable \texttt{controls} gives the total 
number of controls used and $\mathtt{num}$ is the number of nontrivial unitary controlled gates 
used. The variable $\mathtt V$ is the product of the controlled gates and hence, must always equal 
to the identity matrix. This is used to help the user check that the decomposition is correct. The 
matlab program \emph{gatecount.m} counts the total number of controls in our scheme and that of 
\cite{fin}. 

\bigskip\noindent
{\bf Acknowledgment}

\medskip
The research of Li
was supported by a USA NSF grant, and a HKU RGC grant.
Li was  an honorary professor of the
University of Hong Kong, and an honorary professor of Shanghai University.

After the first version of this manuscript was put on arXiv,  
the following related references \cite{Bet,Cet,dask,mdel,slepoy,Tucci} came to our attention.


\begin{thebibliography}{References}\small

\bibitem{Bet}
\textsc{M. Blaauboer, R.L. de Visser}, 
{An analytical decomposition protocol for optimal implementation of two-qubit entangling gates},
{\emph{J. Phys. A} 41, 395307 (2008)}.
 
\bibitem{bar} \textsc{A. Barenco et al.}, {Elementary gates for quantum computation}, 
{\emph {Phys. Rev. A}} 52, 3457 (1995).

\bibitem{Cet}
\textsc{R. Cabrera, T. Strohecker, and H. Rabitz}, 
{The canonical coset decomposition of unitary matrices through Householder transformations},
{\emph{J. Math. Phys.} 51 082101 (2010)}.

\bibitem{dask} \textsc{A. Daskin and S. Kais }, 
{Decomposition of unitary matrices for finding quantum circuits: Application to molecular 
Hamiltonians}, {\emph{The Journal of Chemical Physics}} 134, 144112 (2011).

\bibitem{DiV}\textsc{D.P. DiVincenzo}, {Two-Bit Gates are Universal for Quantum Computation}, {\emph{Phys. Rev. A} 51, 1015 (1995)}.

\bibitem{mdel} \textsc{A. Galindo and M.A. Martin-Delgado}, {Information and Computation: Classical and Quantum Aspects}, { \emph{Rev. Mod. Phys.} 74, 347-423 (2002)}

\bibitem{knill} \textsc{E. Knill}, {Approximation by Quantum Circuits}, 
{LANL Report LAUR-95-2225 arXiv:quant-ph/9508006v1}.

\bibitem{reb} \textsc{C.K. Li, X. Yin and R. Roberts}, {Decomposition of Unitary 
Matrices and Quantum Gates},
{\emph {Int. J. Quantum Inform.}} 11, 1350015 (2013).

\bibitem{nc} \textsc{M.A. Nielsen and I. L. Chuang}, {\emph{Quantum Computation and 
Quantum Information}}, {Cambridge University Press (2000).}

\bibitem{slepoy} \textsc{A. Slepoy}, {Quantum Gate Decomposition Algorithms}, 
{\emph{Sandia Report, SAND2006-3440}}, Printed July 2006
.
\bibitem{Tucci} \textsc{R.R. Tucci}, {A Rudimentary Quantum Compiler(2cnd Ed.)}, 
{arXiv:quant-ph/9902062}.

\bibitem{fin1} \textsc{J. Vartiainen, M. M$\ddot{\mbox{o}}$tt$\ddot{\mbox{o}}$nen, V. 
Bergholm and M. Salomaa}, {Quantum Circuits for
General Multiqubit Gates}, {\emph {Phys. Rev. Lett.}} 93 (13), 130502 (2004).

\bibitem{fin2} \textsc{J. Vartiainen, M. M$\ddot{\mbox{o}}$tt$\ddot{\mbox{o}}$nen, 
V.  Bergholm and M. Salomaa}, {Transformation of quantum
states using uniformly controlled rotations}, 
{\emph {Quant. Inf. Comp.}} 5, 467 (2005).

\bibitem{fin} \textsc{J. Vartiainen, M. M$\ddot{\mbox{o}}$tt$\ddot{\mbox{o}}$nen and 
M.  Salomaa}, {Efficient Decomposition of Quantum Gates}, 
{\emph {Phys. Rev. Lett.}} 92, 
177902 (2004).




\end{thebibliography}
\end{document}